\documentclass[twoside,web,9pt]{ieeecolor}
\usepackage{generic}
\usepackage{cite}
\usepackage{amsmath,amssymb,amsfonts}
\usepackage{algorithmic}
\usepackage{graphicx}
\usepackage{bm}
\usepackage{mathrsfs}
\usepackage{comment}
\newtheorem{theorem}{Theorem}
\newtheorem{rem}{Remark}[section]

\newtheorem{defin}{Definition}[section]
\newtheorem{ass}{Assumption}
\newtheorem{lem}{Lemma}
\newtheorem{prop}{Proposition}
\newtheorem{corol}{Corollary}
\newenvironment{assumption}{\begin{ass}}{\hfill $\bullet$ \end{ass}}
\newenvironment{remark}{\begin{rem}}{\hfill $\bullet$\end{rem}}
\newenvironment{definition}{\begin{defin}}{\hfill $\bullet$ \end{defin}}

\usepackage{algorithm,algorithmic}
\usepackage{hyperref}

\usepackage{textcomp}
\def\BibTeX{{\rm B\kern-.05em{\sc i\kern-.025em b}\kern-.08em
    T\kern-.1667em\lower.7ex\hbox{E}\kern-.125emX}}
\newcommand{\chgk[1]}{\color{blue} #1}
\begin{document}
\title{Backstepping Control of Multidimensional Coupled First-Order Hyperbolic PDEs with Collinear Velocities}
\author{Mohamed Camil Belhadjoudja
\thanks{Mohamed Camil Belhadjoudja is with the Department of Applied
Mathematics, University of Waterloo, 200 University Avenue West, Waterloo, ON,
Canada, N2L 3G1 (e-mail: m2camilb@uwaterloo.ca).}
}

\maketitle

\begin{abstract}
This paper addresses the backstepping boundary stabilization of coupled multidimensional first-order hyperbolic systems. We consider systems whose transport velocity fields are collinear, meaning that each velocity field is a scalar multiple of a common base velocity field. Building upon a recent framework developed for scalar multidimensional first-order hyperbolic equations, we introduce a change of variables, based on characteristic curves defined entirely in the spatial domain, that converts the original multidimensional system into a continuum of coupled one-dimensional first-order hyperbolic systems. By designing a backstepping controller for each system in the continuum representation, and assuming that the transit times of the characteristic curves are uniformly bounded, we achieve finite-time stabilization of the multidimensional system.
\end{abstract}

\begin{IEEEkeywords}
Backstepping, boundary control, multidimensional coupled hyperbolic systems, method of characteristics, finite-time stability.
\end{IEEEkeywords}


\newcommand{\ud}{{\boldsymbol u}}      
\newcommand{\vd}{{\boldsymbol v}}      
\newcommand{\al}{{\boldsymbol \alpha}} 
\newcommand{\be}{{\boldsymbol \beta}}  

\newcommand{\Lp}{\Lambda^{+}}          
\newcommand{\Lm}{\Lambda^{-}}          

\newcommand{\Spp}{\Sigma^{++}}
\newcommand{\Spm}{\Sigma^{+-}}
\newcommand{\Smp}{\Sigma^{-+}}
\newcommand{\Smm}{\Sigma^{--}}

\newcommand{\Om}{\Omega}               
\newcommand{\Rn}{\mathbb{R}^{N}}       
\newcommand{\Gp}{\Gamma^{+}}           
\newcommand{\Gm}{\Gamma^{-}}           
\newcommand{\Go}{\Gamma^{0}}           
\newcommand{\Psii}{\Psi^{-1}}          

\newcommand{\ut}{\tilde{\ud}}          
\newcommand{\vt}{\tilde{\vd}}          
\newcommand{\ub}{\bar{\ud}}            
\newcommand{\vb}{\bar{\vd}}            
\newcommand{\Rst}{R_{\star}}           

\section{Introduction}

First-order hyperbolic equations can be used to model population dynamics, heat and mass transfer, fluid flow, traffic networks, and reactive processes. In many applications, the dynamics involve several interacting variables and are therefore described by systems of coupled hyperbolic equations rather than by a single scalar equation. Furthermore, although one-dimensional models are often used as first approximations obtained by neglecting transverse dynamics, many processes are intrinsically multidimensional.

Over the past decades, major and far-reaching advances have been achieved in the boundary control of one-dimensional first-order hyperbolic equations and systems thereof. Notable developments include, but are not limited to, the introduction of backstepping and Fredholm techniques \cite{krstic_H,traffic_book,coron_paper,first_c,hu,auriol,auriol3,irscheid,ghoussein,ghoussein2,guan,aamo,ensemble0,ensemble1,ensemble2,volt_H,back_rob,krstic_review,coron_system}, and Lyapunov-based methods \cite{lyapu_3,lyapu_4,lyapu_5,lyapu_7,lyapu_PI}. At the same time, fundamental concepts of controllability for hyperbolic systems have been extensively and systematically investigated \cite{coron_control,chitour_1,chitour_2}. In recent years, there have been important extensions of some of the aforementioned results to the multidimensional setting \cite{lyapu_8,lyapu_9,lyapu_10,lyapu_11,liu,meurer,vaz_key2,vaz_2,vaz_3,vaz_4,vaz_5,vaz_6,vaz_7,vaz_8,camil}. In this context, the present paper focuses on the extension of one-dimensional backstepping methods to higher dimensions.

In the recent work \cite{camil_1}, we extended the backstepping method developed for scalar one-dimensional first-order hyperbolic equations in \cite{krstic_H} to scalar first-order hyperbolic equations posed on spatial domains of arbitrary dimension. The central ingredient is a novel change of variables, grounded in the method of characteristics, which transforms the original multidimensional system into a continuum of one-dimensional first-order hyperbolic equations. Our construction departs from the classical formulation of characteristics \cite[Sections 2.1 and 3.2]{evans}. Usually, the characteristic curves associated to a first-order $N$-dimensional (with $N\in \mathbb{N}^*$) evolution hyperbolic equation, with Lipschitz velocity $a:\mathbb{R}^N\to \mathbb{R}^N$, are the curves $\{t\mapsto (\phi(t;x),t)\ : \ x\in \mathbb{R}^N\}$, defined in the space–time domain, where $t\mapsto \phi(t;x)$ is the unique complete solution to $\frac{d\phi}{dt} = a(\phi)$ with initial condition $\phi(0;x)=x$. Along these curves, one can transform the original multidimensional system into a continuum of ordinary differential equations (ODE)s. By contrast, in \cite{camil_1}, we suppress the time variable in the definition of characteristics and instead construct curves entirely within the spatial domain. That is, instead of $\{t\mapsto (\phi(t;x),t)\ : \ x\in \mathbb{R}^N\}$, where $t$ denotes the time variable, we consider the curves $\{s\mapsto \phi(s;x) \ : \ x\in \mathbb{R}^N\}$; see Figure \ref{fig1}. This perspective leads to a family of curves foliating the spatial domain, along which the original multidimensional PDE is reformulated, after introducing an appropriate coordinate system, as a continuum of one-dimensional first-order hyperbolic PDEs, rather than a system of ODEs. This reformulation is particularly well-suited to the application of PDE control techniques, such as backstepping. By stabilizing each equation of the continuum in finite time, and under the assumption that the associated characteristic transit times (the time it takes to flow along the characteristic curve from its entry point to its exit point) are uniformly bounded, we deduce the finite-time stability of the multidimensional system.      

To the best of our knowledge, \cite{camil_1} is the only work to propose a backstepping controller for scalar multidimensional first-order hyperbolic equations with general transport operator. Indeed, other extensions have focused on certain parabolic equations \cite{liu,meurer,vaz_key2,vaz_2,vaz_3,vaz_4,vaz_5,vaz_6,vaz_7,vaz_8,camil}, and on ensembles of hyperbolic equations \cite{ensemble0,ensemble1,ensemble3,ensemble2}, which can be interpreted either as a continuum of one-dimensional equations (as in the terminology adopted in this work) or as multidimensional systems in which the transport operator acts only along a single (Euclidean) spatial coordinate. In the present paper, we extend the approach of \cite{camil_1} to coupled systems of first-order multidimensional hyperbolic equations. More specifically, whereas the transformation introduced in \cite{camil_1} maps a general scalar multidimensional first-order hyperbolic equation into a continuum of independent one-dimensional equations of the type considered in \cite{krstic_H}, we show that a coupled system of multidimensional first-order hyperbolic equations can be transformed, through a similar change of variables, into a continuum of coupled one-dimensional first-order hyperbolic systems of the form studied in \cite{hu,auriol,coron_system}. In the resulting representation, systems corresponding to different elements of the continuum are decoupled, and all couplings are confined to the states within each individual one-dimensional system. The key structural condition enabling this reduction is that the transport velocity fields of all equations in the multidimensional system are collinear. That is, each velocity field can be expressed as a scalar multiple of a common base velocity field. Under this assumption, characteristic curves are defined with respect to the common velocity field, while the scalar multipliers determine the direction and speed of propagation of each state along these characteristics. Once this reduction has been achieved, any of the boundary controllers developed in \cite{hu,auriol,coron_system} can be applied independently to each element of the continuum, yielding finite-time stabilization of the original multidimensional system.

As in \cite{camil_1}, the novelty of our approach does not lie in the control of a continuum of PDEs per se, a topic that has already received considerable attention and led to several important results \cite{ensemble0,ensemble1,ensemble3,ensemble2}. Rather, our contribution is to construct a bridge between the control of multidimensional first-order hyperbolic systems and the control of continua of one-dimensional first-order hyperbolic systems. 

The remainder of this paper is organized as follows. Section \ref{sec:system} introduces the class of multidimensional systems under consideration and presents the standing assumptions. In Section \ref{sec:reduction}, we show how the original multidimensional system can be transformed into a continuum of one-dimensional systems. Section \ref{sec:control} is devoted to the design of the backstepping controller. Finally, the paper concludes with a discussion of future research directions.

\textit{Notations.} Let $N\in \mathbb{N}^*$ and a function $v:\mathbb{R}^N\times [0,+\infty)\to \mathbb{R}$, $(x,t)\mapsto v(x,t)$, with $x=(x_1,x_2,...,x_N)\in \mathbb{R}^N$. We denote the gradient of $v$ by $\nabla v = \left(\frac{\partial v}{\partial x_1},\frac{\partial v}{\partial x_2}, ..., \frac{\partial v}{\partial x_n}\right)^{\top},$ where $\partial v/\partial x_i$ denotes the partial derivative of $v$ with respect to $x_i$. Similarly, we denote by $\partial v/\partial t$ the partial derivative of $v$ with respect to $t$. We denote by $v(\cdot,t)$ the function $x\mapsto v(x,t)$. Given two vectors $x,y\in \mathbb{R}^n$ with $x=(x_1,x_2,...,x_N)^{\top}$ and $y=(y_1,y_2,...,y_N)^{\top}$, we let 
$x\cdot y = x_1 y_1 + x_2 y_2 + ... + x_N y_N$ and $|x| = \sqrt{x_1^2+x_2^2+...+x_N^2}$. Given a matrix $A\in \mathbb{R}^{p\times q}$, with $p,q\in \mathbb{N}$, we denote the element at the $i^{th}$ line and $j^{th}$ column of $A$ either by $A_{ij}$ or $[A]_{ij}$. Finally, given a domain $\Omega \subset \mathbb{R}^N$, we denote by $\mathcal{C}(\Omega;\mathbb{R}^n)$, with $n\in \mathbb{N}^*$, the set of continuous functions $v:\Omega \to \mathbb{R}^n$. Similarly, we denote by $\mathcal{C}^1(\Omega;\mathbb{R}^n)$ the set of continuously differentiable functions $v:\Omega \to \mathbb{R}^n$. Moreover, we denote by $L^2(a,b)$, with $a,b\in \mathbb{R}$ and $a<b$, the set of functions $v:(a,b)\to \mathbb{R}$ such that $|v|_{L^2}=\sqrt{\int_{a}^{b}v(x)^2dx} < +\infty$, and by $\mathcal{C}([0,+\infty);L^2(a,b))$ the set of continuous functions $v:[0,+\infty)\to L^2(a,b)$. 

\begin{figure}
    \centering
    \includegraphics[width=\linewidth]{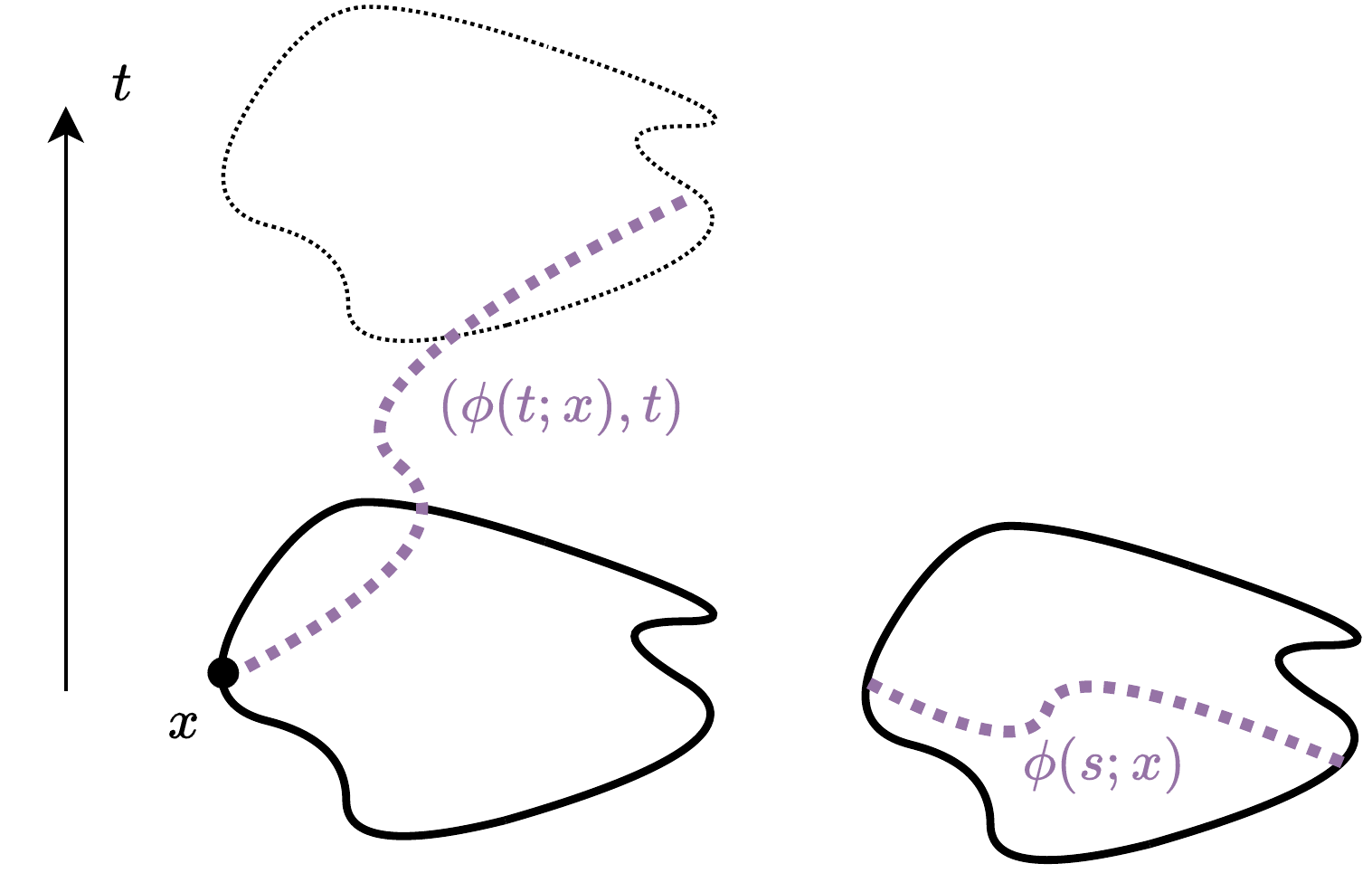}
    \caption{A characteristic curve defined in the space-time domain (left illustration) vs in the space domain (right illustration).}
    \label{fig1}
\end{figure}

\section{System description}\label{sec:system}

This paper extends some of the results in \cite{hu,auriol,coron_system} to arbitrary spatial dimensions.
The systems considered in the aforementioned references are one-dimensional homodirectional/heterodirectional first-order hyperbolic equations of the form
\begin{align}
\frac{\partial \ud}{\partial t} + \Lp\frac{\partial \ud}{\partial x}
  &= \Spp\ud+\Spm\vd, \label{hu1}\\
\frac{\partial \vd}{\partial t} - \Lm\frac{\partial \vd}{\partial x}
  &= \Smp\ud+\Smm\vd, \label{hu2}
\end{align}
where 
\begin{align*}
&\ud(x,t) = (u_1(x,t),\dots,u_n(x,t))^{\top}\in\mathbb{R}^n \\
&\text{and} \ \ \vd(x,t) = (v_1(x,t),\dots,v_m(x,t))^{\top}\in\mathbb{R}^m 
\end{align*}
are the state variables with $(n,m)\in \mathbb{N}\times \mathbb{N}^*$, $x\in [0,1]$ is the
space variable, and $t\geq 0$ is time variable. The diagonal matrices
\begin{align}
&\Lp = \mathrm{diag}(\lambda_1,\lambda_2,\dots,\lambda_n) \in \mathbb{R}^{n\times n}, \label{lambda_+}\\
&\text{and} \ \ \Lm = \mathrm{diag}(\mu_1,\mu_2,\dots,\mu_m) \in \mathbb{R}^{m\times m} \label{lambda_-}
\end{align}
collect the transport velocities, which satisfy
\begin{align}
-\mu_m<\dots<-\mu_1<0<\lambda_1\le\dots\le\lambda_n. \label{speeds}
\end{align}
Furthermore, $\Sigma^{ij}$ are matrices of appropriate dimensions. 

The boundary conditions are
\begin{align}
\ud(0,t)&=Q_0\,\vd(0,t), \\
\vd(1,t)&=R_1\,\ud(1,t)+\bm{U}(\ud(\cdot,t),\vd(\cdot,t)), \label{bc}
\end{align}
where $Q_0,R_1$ are matrices of appropriate dimensions and $\bm{U}(\ud(\cdot,t),\vd(\cdot,t))\in\mathbb{R}^m$ is the control variable, designed to stabilize in finite time the origin $\{(\ud,\vd)=0\}$.

We now formulate an analogous control-design problem in arbitrary dimensions. Let $N\in\mathbb{N}^*$ and let $\Om\subset\mathbb{R}^N$ be bounded, open, and connected.

\begin{assumption}\label{a:dom}
$\Om$ is of class $\mathcal{C}^1$ \cite[p.~626]{evans}.
\end{assumption}

Under Assumption~\ref{a:dom}, $\Omega$ admits a unique unit outward normal
$\nu:\partial\Om\to\mathbb{R}^N$, where $\partial \Omega$ denotes the boundary of $\Omega$.

In one dimension, transport is restricted to two directions along a single axis,
so all velocities are collinear. In higher dimensions one could consider transport
along several axes simultaneously, but this generality leads to difficulties we do
not know how to overcome. We therefore restrict the study to a model with a single distinguished
transport axis. More precisely, let
\begin{align*}
&a : \mathbb{R}^N \to \mathbb{R}^N, \\
&x=(x_1,\dots,x_N)^{\top} \mapsto a(x)=(a_1(x),\dots,a_N(x))^{\top}.
\end{align*}
Transport velocities are assumed to be scalar multiples of $a$:
at each point $x$, transport is either in the direction of $a(x)$ or in the direction of $-a(x)$, depending on the sign of the scalar multiplier.

\begin{assumption}\label{a:lip}
$a$ is Lipschitz.
\end{assumption}

The boundary is decomposed into 
\begin{align*}
\Gp &= \{z\in\partial\Om:\nu(z)\cdot a(z)>0\},\\
\Gm &= \{z\in\partial\Om:\nu(z)\cdot a(z)<0\},\\
\Go &= \{z\in\partial\Om:\nu(z)\cdot a(z)=0\},
\end{align*}

For each $x\in\bar\Om$, where $\bar{\Omega}$ denotes the closure of $\Omega$, the characteristic curve associated to $a$ and passing through $x$ is the complete
classical solution $s \in \mathbb{R}\mapsto\phi(s;x)\in \mathbb{R}^N$ of
\begin{align}
\frac{d\phi}{ds}=a(\phi), \qquad \phi(0;x)=x, \label{char}
\end{align}
whose existence and uniqueness follow from the Cauchy--Lipschitz theorem under
Assumption~\ref{a:lip}. Its \emph{entry time} $\tau^{+}(x)$ and \emph{exit time}
$\tau^{-}(x)$ are defined by
\begin{align*}
\tau^{+}(x) &= \inf\{s\ge 0:\phi(s;x)\in\Gp\}, \\
\tau^{-}(x) &= \sup\{s\le 0:\phi(s;x)\in\Gm\}, 
\end{align*}
whenever they exist.

\begin{assumption}[Non-trapping characteristics]\label{a:nt}
For all $x\in\bar\Om\setminus\Go$, $\tau^{+}(x)$ and $\tau^{-}(x)$ exist and are
finite, and
\begin{align*}
T_{\max}=\sup_{x\in\bar\Om\setminus\Go}\{\tau^{+}(x)-\tau^{-}(x)\}<+\infty.
\end{align*}
\end{assumption}

Since solutions of \eqref{char} are unique, characteristics do not intersect. Hence, each $x\in\bar\Om\setminus\Go$ lies on exactly one characteristic, and the
\emph{transit time} 
$$T(x)=\tau^+(x)-\tau^-(x)>0$$ is constant along it: for
all $\rho\in\Gm$ and all $s\in[0,T(\rho)]$, 
$$T(\phi(s;\rho))=T(\rho).$$

Now, as an extension of \eqref{hu1}--\eqref{bc} to higher dimensions, we consider the system
\begin{align}
\frac{\partial u_i}{\partial t} + \lambda_i\,a(x)\cdot\nabla u_i
  &= \sum_{k=1}^n\Spp_{ik}\,u_k +\sum_{p=1}^m\Spm_{ip}\,v_p, \label{eqU}\\
\frac{\partial v_j}{\partial t} -\mu_j\,a(x)\cdot\nabla v_j
  &=
    \sum_{k=1}^n\Smp_{jk}\,u_k +\sum_{p=1}^m\Smm_{jp}\,v_p, \label{eqV}
\end{align}
for all $i\in \{1,2,\dots,n\}$ and all $j\in \{1,2,\dots,m\}$, with $(n,m)\in \mathbb{N}\times \mathbb{N}^*$, and $u_i(x,t),v_j(x,t)\in \mathbb{R}$.  
\begin{assumption}\label{a:cofol}
The coefficients $\{\lambda_i,\mu_j\}$ verify \eqref{speeds}.     
\end{assumption}

Under Assumption \ref{a:cofol}, $\ud$ is transported in the direction of $a$, whereas $\vd$ is transported in the opposite direction; see Figure \ref{fig2}.

The boundary conditions are prescribed as
\begin{align}
\ud(z,t)&=Q_0(z)\,\vd(z,t), &&z\in \Gm, \label{bcU} \\
\vd(z,t)&=R_1(z)\,\ud(z,t)+\bm{U}(\ud(\cdot,t),\vd(\cdot,t),z), &&z\in \Gp, \label{bcV}
\end{align}
where $\ud=(u_1,u_2,...,u_n)^{\top}$, $\vd=(v_1,v_2,...,v_m)^{\top}$, $Q_{0}:\Gm\to\mathbb{R}^{n\times m}$ and $R_{1}:\Gp\to\mathbb{R}^{m\times n}$, and $\bm{U}(\ud(\cdot,t),\vd(\cdot,t),z)\in \mathbb{R}^{m}$ is the control variable to be designed to stabilize the origin $\{(\ud,\vd)=0\}$ in finite time, in a sense to be made precise below.

From now on, we denote by $\mathbb{S}$ the control system \eqref{eqU}--\eqref{bcV}. Moreover, to simplify the notations, we may write $\bm{U}(z,t)$ instead of $\bm{U}(\ud(\cdot,t),\vd(\cdot,t),z)$. 

\begin{figure}
    \centering
    \includegraphics[width=0.5\linewidth]{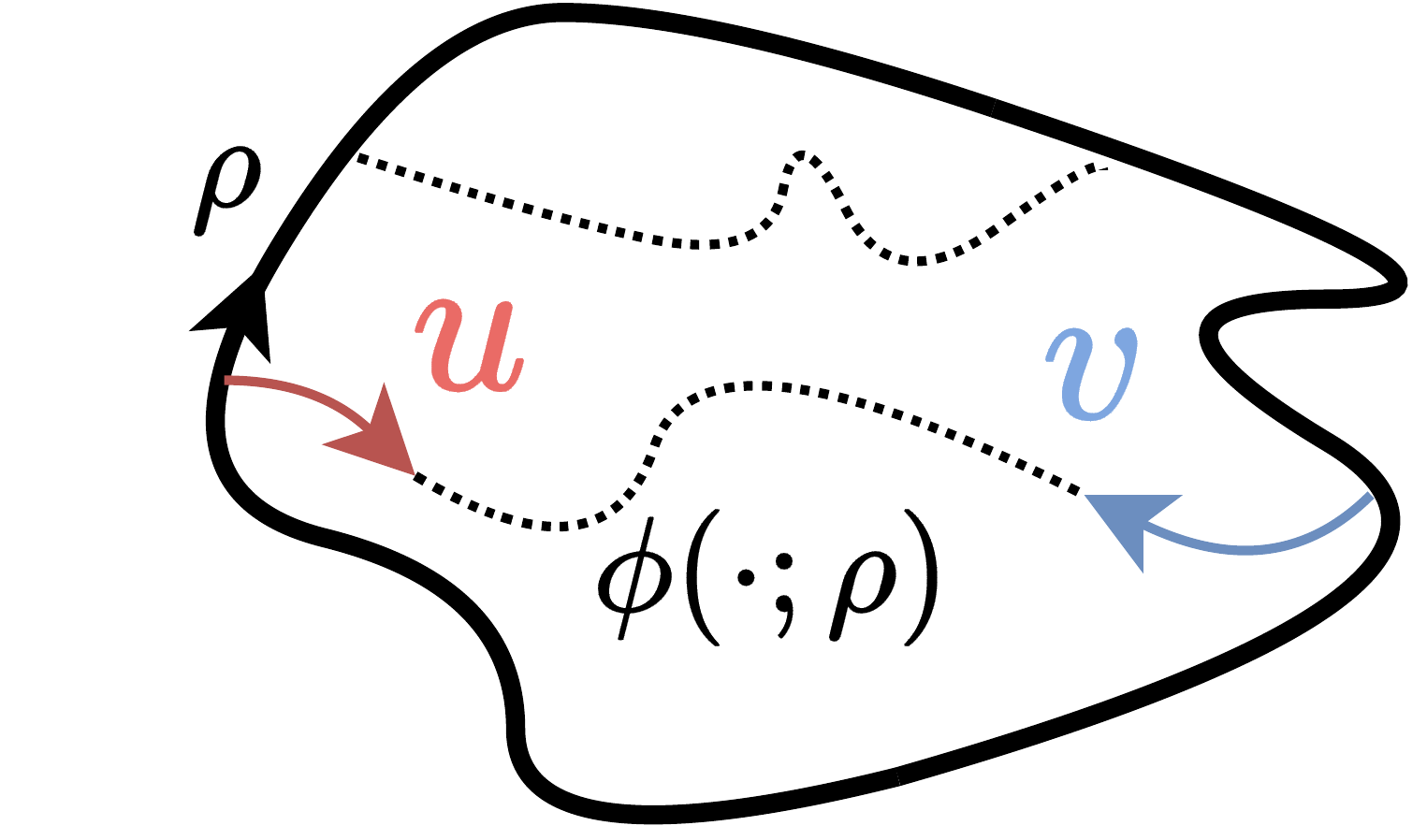}\caption{The states $\ud$ and $\vd$ propagate along the characteristic curves in opposite directions.}
    \label{fig2}
\end{figure}


\section{Concept of solutions and reduction to a continuum of one-dimensional systems}\label{sec:reduction}

Let us first recall the concept of classical solution to $\mathbb{S}$. 
\begin{definition}[Classical solution]\label{def:sol}
A solution to $\mathbb{S}$ with
initial data 
\begin{align*}
(\ud_{o},\vd_{o})\in&~\left[\mathcal{C}(\bar\Om \setminus \Gamma^0;\mathbb{R}^n)\cap\mathcal{C}^1(\Om;\mathbb{R}^n)\right] \\
&~\times \left[\mathcal{C}(\bar\Om\setminus \Gamma^0;\mathbb{R}^m)\cap\mathcal{C}^1(\Om;\mathbb{R}^m)\right]
\end{align*}
is a pair of functions $(\ud,\vd)$ with
$$\ud \in \mathcal{C}((\bar\Om\setminus \Gamma^0)\times[0,\infty);\mathbb{R}^n)\cap\mathcal{C}^1(\Om\times(0,\infty);\mathbb{R}^n),$$
$$\vd \in \mathcal{C}((\bar\Om\setminus \Gamma^0)\times[0,\infty);\mathbb{R}^m)\cap\mathcal{C}^1(\Om\times(0,\infty);\mathbb{R}^m),$$
satisfying \eqref{eqU}--\eqref{eqV} for all $(x,t)\in \Om\times(0,\infty)$, \eqref{bcU} for all $(z,t)\in \Gm\times [0,+\infty)$, \eqref{bcV} for all $(z,t)\in \Gamma^+\times [0,+\infty)$, and $(\ud(x,0),\vd(x,0))=(\ud_o(x),\vd_o(x))$ for all $x\in \bar{\Omega}\setminus \Gamma^0$. 
\end{definition}

Let $(\ud,\vd)$ be a classical solution to $\mathbb{S}$. We will extend the change of variables in \cite{camil_1} to show that the restriction of $(\ud,\vd)$ to each characteristic curve $\phi(\cdot;\rho)$, with $\rho\in \Gamma^-$, solves a one-dimensional system of the same form as \eqref{hu1}--\eqref{bc}. 

For each $x\in\bar\Om\setminus\Go$, we define the \emph{time to exit}
\begin{align*}
\sigma(x) = -\tau^{-}(x) \ge 0, 
\end{align*}
and the \emph{exit point}
\begin{align*}
\rho(x)=\phi(\tau^{-}(x);x)\in\Gm.
\end{align*}
We then introduce the \emph{transformed domain}
\begin{align*}
\hat\Om = \{(\sigma,\rho)\in\mathbb{R}\times\Gm : 0\le\sigma\le T(\rho)\},
\end{align*}
with $T(\rho)$ the transit time of the characteristic $\phi(\cdot;\rho)$, and the map
\begin{align}
\Psi:\bar\Om\setminus\Go\to\hat\Om, \qquad  x\mapsto(\sigma(x),\rho(x)). \label{Psidef}
\end{align}
The map $\Psi$ is a bijection whose inverse is
$\Psii(\sigma,\rho)=\phi(\sigma;\rho)$. We define the
\emph{transformed states}, 
\begin{align}
&\ut(\sigma,\rho,t)=\ud(\Psii(\sigma,\rho),t), \nonumber \\
&\text{and} \quad \vt(\sigma,\rho,t)=\vd(\Psii(\sigma,\rho),t). \label{transdef}
\end{align}
Note, in particular, that 
\begin{align*}
\ut(0,\rho,t)=\ud(\rho,t) \quad \text{and} \quad \vt(0,\rho,t)=\vd(\rho,t).    
\end{align*}
Moreover, 
\begin{align*}
&\ut(T(\rho),\rho,t)=\ud(\phi(T(\rho);\rho),t) \nonumber \\
&\text{and} \quad \vt(T(\rho),\rho,t)=\vd(\phi(T(\rho);\rho),t).
\end{align*}
We also let
\begin{align}
\tilde{R}_1(\rho) := R_1(\phi(T(\rho);\rho)) \in \mathbb{R}^{m\times n}. \label{Rhat}
\end{align}

The following theorem shows that the transformed states satisfy a continuum of one-dimensional systems of the form \eqref{hu1}--\eqref{bc}, indexed by $\rho\in\Gm$. See Figure \ref{fig3} for an illustration of Theorem \ref{thm:reduction}.

\begin{theorem}\label{thm:reduction}
Let Assumptions \ref{a:dom}--\ref{a:nt} hold, and let $(\ut,\vt)$ be defined by \eqref{transdef} with $(\ud,\vd)$ a classical solution to $\mathbb{S}$. Then, for each $\rho\in\Gm$ and all $\sigma\in(0,T(\rho))$,
\begin{align}
\frac{\partial\ut}{\partial t}+\Lp\frac{\partial\ut}{\partial\sigma}
 &= \Spp\ut+\Spm\vt, \label{red:u}\\
\frac{\partial\vt}{\partial t}-\Lm\frac{\partial\vt}{\partial\sigma}
 &= \Smp\ut+\Smm\vt, \label{red:v}
\end{align}
together with the boundary conditions
\begin{align}
\ut(0,\rho,t) &= Q_0(\rho)\,\vt(0,\rho,t), \label{red:bcU}\\
\vt(T(\rho),\rho,t)
 &= \tilde{R}_1(\rho)\,\ut(T(\rho),\rho,t)+\bm U(\phi(T(\rho);\rho),t). \label{red:bcV}
\end{align}
Here, $\Sigma^{ij}$ are the matrices with entries $\Sigma^{ij}_{kp}$ and $\Lambda^+$ and $\Lambda^-$ are defined in \eqref{lambda_+}--\eqref{lambda_-}. 

Conversely, if $(\ut,\vt)(\cdot,\rho,\cdot)$ satisfies
\eqref{red:u}--\eqref{red:bcV} for every $\rho\in\Gm$, and if
$(\ut\circ\Psi,\vt\circ\Psi)$ belongs to
$[\mathcal{C}((\bar\Om\setminus\Go)\times[0,\infty))\cap
\mathcal{C}^1(\Om\times(0,\infty))]^{n+m}$, then
$(\ud,\vd)=(\ut\circ\Psi,\vt\circ\Psi)$ is a classical solution to
$\mathbb{S}$ in the sense of Definition~\ref{def:sol}.
\hfill $\square$
\end{theorem}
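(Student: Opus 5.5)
The proof rests on the chain rule along characteristics. Fix $\rho\in\Gm$ and $\sigma\in(0,T(\rho))$, and set $x=\phi(\sigma;\rho)$, which lies in $\Om$. By \eqref{transdef}, $\ut(\sigma,\rho,t)=\ud(\phi(\sigma;\rho),t)$. Since $\ud$ is $\mathcal{C}^1$ on $\Om\times(0,\infty)$ and $s\mapsto\phi(s;\rho)$ is $\mathcal{C}^1$ with derivative $a(\phi)$ by \eqref{char}, the map $\sigma\mapsto\ut(\sigma,\rho,t)$ is differentiable and
\[
\frac{\partial\ut}{\partial\sigma}(\sigma,\rho,t)=a(\phi(\sigma;\rho))\cdot\nabla\ud(\phi(\sigma;\rho),t),
\]
understood componentwise. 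Time derivatives commute with the substitution, since $\rho$ and $\sigma$ do not depend on $t$. Substituting into \eqref{eqU}--\eqref{eqV}, evaluated at $x$, gives \eqref{red:u}--\eqref{red:v} directly. The point of the collinear structure is that every $\lambda_i a$ and $\mu_j a$ is differentiated along the same curve, so the same identity serves every component.

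The boundary conditions follow by evaluating at the endpoints. At $\sigma=0$ we have $\phi(0;\rho)=\rho\in\Gm$, so \eqref{bcU} gives \eqref{red:bcU}. At $\sigma=T(\rho)$ we must check that $\phi(T(\rho);\rho)\in\Gp$. For $\rho\in\Gm$ we have $\tau^-(\rho)=0$, because $\rho$ itself belongs to $\Gm$ at $s=0$ and, by Assumption~\ref{a:nt}, the characteristic lies in $\bar\Om$ only between its entry and exit times. Hence $T(\rho)=\tau^+(\rho)$. Since $\Gp$ is closed in $\partial\Om\setminus\Go$ and the infimum in the definition of $\tau^+$ is attained, the point $\phi(\tau^+(\rho);\rho)$ lies in $\Gp$. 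Then \eqref{bcV}, together with \eqref{Rhat}, yields \eqref{red:bcV}.

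For the converse, assume $(\ud,\vd)=(\ut\circ\Psi,\vt\circ\Psi)$ has the stated regularity. Take $x\in\Om$ and write $\Psi(x)=(\sigma,\rho)$. The same chain-rule identity now runs backwards: $a(x)\cdot\nabla u_i(x,t)=\partial_\sigma\tilde u_i(\sigma,\rho,t)$. This holds because $u_i(\phi(\sigma';\rho),t)=\tilde u_i(\sigma',\rho,t)$ for all $\sigma'$ near $\sigma$, and $u_i$ is $\mathcal{C}^1$ at $x$. The reduced equations therefore give \eqref{eqU}--\eqref{eqV}. Every $z\in\Gm$ satisfies $\Psi(z)=(0,z)$, and every $z\in\Gp$ can be written as $z=\phi(T(\rho);\rho)$ for $\rho=\rho(z)$. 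So \eqref{red:bcU}--\eqref{red:bcV} give \eqref{bcU}--\eqref{bcV}, and the initial condition transfers through the bijection $\Psi$.

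The main obstacle is the geometric bookkeeping rather than the computation. We need four facts: that $\Psi$ is a bijection onto $\hat\Om$, which the paper asserts; that interior points $\sigma\in(0,T(\rho))$ map into $\Om$ rather than onto $\partial\Om$; that $\tau^-(\rho)=0$ for $\rho\in\Gm$; and that the exit point lies in $\Gp$. I would first show that a characteristic starting at $\rho\in\Gm$ enters $\Om$ immediately, using $\nu\cdot a<0$ and the $\mathcal{C}^1$ boundary from Assumption~\ref{a:dom}. I would then argue that it cannot touch $\partial\Om$ again before $\tau^+(\rho)$ except at points of $\Gp$, using the definition of $\tau^+$ as an infimum and the convention that such tangential touching is excluded, as in \cite{camil_1}. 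Beyond this, the proof is a direct application of the chain rule.
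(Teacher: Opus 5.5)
Your proposal is correct and follows essentially the paper's proof: the chain rule along $\sigma\mapsto\phi(\sigma;\rho)$, then evaluation at $\sigma=0$ and $\sigma=T(\rho)$ for the boundary conditions, then reversal of these steps for the converse. You differentiate directly in $\sigma$, whereas the paper differentiates at $s=0$ and uses $\phi(s;\phi(\sigma;\rho))=\phi(s+\sigma;\rho)$, but this is the same computation.

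The paper takes the geometric facts you list for granted through the asserted bijectivity of $\Psi$. One caution on your argument there: closedness of $\Gamma^{+}$ in $\partial\Omega\setminus\Gamma^{0}$ does not by itself make the infimum defining $\tau^{+}$ attained, since a minimizing sequence may converge to a point of $\Gamma^{0}$. So $\phi(T(\rho);\rho)\in\Gamma^{+}$ genuinely rests on the no-tangential-touching convention you invoke at the end.
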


\begin{proof}
We transform each term of \eqref{eqU} after the substitution
$x = \Psi^{-1}(\sigma,\rho) = \phi(\sigma;\rho)$; the treatment of \eqref{eqV}
is identical.

\textit{Step 1: time derivative.}
Differentiating both sides of \eqref{transdef} with respect to $t$, and using
the fact that $\Psi^{-1}(\sigma,\rho)$ does not depend on $t$, we obtain
\begin{align}
    \frac{\partial \tilde{u}_i}{\partial t}(\sigma,\rho,t) =
    \frac{\partial u_i}{\partial t}\!\left(\Psi^{-1}(\sigma,\rho),t\right).
\end{align}

\textit{Step 2: transport term.}
We wish to show that
\begin{align}
    a(x) \cdot \nabla u_i(x,t) =
    \frac{\partial \tilde{u}_i}{\partial \sigma}(\sigma,\rho,t)
    \label{p:claim}
\end{align}
at $x = \Psi^{-1}(\sigma,\rho)$. To this end, we differentiate the function
$s \mapsto u_i(\phi(s;x),t)$, to obtain
\begin{align}
    \frac{d}{ds}\, u_i(\phi(s;x),t)
    &= \frac{d\phi(s;x)}{ds} \cdot \nabla u_i(\phi(s;x),t) \nonumber \\
    &= a(\phi(s;x)) \cdot \nabla u_i(\phi(s;x),t).
    \label{p:chain}
\end{align}
Evaluating \eqref{p:chain} at $s = 0$ and using $\phi(0;x) = x$ yields
\begin{align}
    a(x) \cdot \nabla u_i(x,t) =
    \frac{d}{ds}\bigg|_{s=0} u_i(\phi(s;x),t).
    \label{p:identity}
\end{align}
We now set $x = \Psi^{-1}(\sigma,\rho) = \phi(\sigma;\rho)$ in
\eqref{p:identity}. The right-hand side becomes
\begin{align}
    \frac{d}{ds}\bigg|_{s=0}
    u_i\!\left(\phi\!\left(s;\,\phi(\sigma;\rho)\right),t\right).
    \label{p:rhs}
\end{align}
We simplify the argument of $u_i$ using the identity
\begin{align}
    \phi\!\left(s;\,\phi(\sigma;\rho)\right) = \phi(s+\sigma;\rho).
    \label{p:semigroup}
\end{align}
Substituting \eqref{p:semigroup} into \eqref{p:rhs}, we obtain
\begin{align*}
    \frac{d}{ds}\bigg|_{s=0}
    u_i\!\left(\phi\!\left(s;\,\phi(\sigma;\rho)\right),t\right) =
    \frac{d}{ds}\bigg|_{s=0}
    u_i\!\left(\phi(s+\sigma;\rho),t\right).
\end{align*}
By definition \eqref{transdef},
$u_i(\phi(s+\sigma;\rho),t) = \tilde{u}_i(s+\sigma,\rho,t)$.
Substituting and differentiating at $s = 0$ gives
\begin{align}
    \frac{d}{ds}\bigg|_{s=0} \tilde{u}_i(s+\sigma,\rho,t) =
    \frac{\partial \tilde{u}_i}{\partial \sigma}(\sigma,\rho,t).
    \label{p:dsigma}
\end{align}
Combining \eqref{p:identity}--\eqref{p:dsigma}, we conclude \eqref{p:claim}.

\textit{Step 3: coupling and reaction terms.}
As the coefficients
$\{\Sigma^{++},\Sigma^{+-},\Sigma^{-+},\Sigma^{--}\}$ are constant, we have directly
\begin{align*}
    \sum_{k=1}^{n} \Sigma^{++}_{ik}\, u_k\!\left(\Psi^{-1}(\sigma,\rho),t\right)
    = \sum_{k=1}^{n} \Sigma^{++}_{ik}\, \tilde{u}_k(\sigma,\rho,t),
\end{align*}
and likewise for the remaining terms.

\textit{Step 4: boundary condition on $\Gamma^-$.}
At $\sigma = 0$ we have $x = \phi(0;\rho) = \rho \in \Gamma^-$, so that
$\tilde{\ud}(0,\rho,t) = \ud(\rho,t)$ and
$\tilde{\vd}(0,\rho,t) = \vd(\rho,t)$. Substituting in \eqref{bcU} gives
\eqref{red:bcU}.

\textit{Step 5: boundary condition on $\Gamma^+$.}
At $\sigma = T(\rho)$ we have
$x = \phi(T(\rho);\rho) \in \Gamma^+$. Substituting in \eqref{bcV} and using
the definition \eqref{Rhat} gives \eqref{red:bcV}.

\textit{Conclusion.}
Summing the results of Steps~1--3 and using the fact that $(\ud,\vd)$ satisfies
\eqref{eqU}--\eqref{eqV}, we obtain \eqref{red:u}--\eqref{red:v}. The boundary
conditions follow from Steps~4--5.

The converse is immediate by reversing each step and using the assumed
regularity of
$(\tilde{\ud} \circ \Psi, \tilde{\vd} \circ \Psi)$.
\end{proof}

In light of Theorem~\ref{thm:reduction}, we introduce the following concept of
solution.

\begin{definition}[Weak characteristic solution]\label{def:char}
A weak characteristic solution to $\mathbb{S}$ starting from $(\ud_o,\vd_o)$, where
$\sigma\mapsto (\ud_o(\phi(\sigma;\rho)),\vd_o(\phi(\sigma;\rho)))$ belongs to $L^2(0,T(\rho))^{n+m}$ for each
$\rho\in\Gm$, is any pair of real-valued functions $(\ud,\vd)=(\ut,\vt)\circ\Psi$ defined on
$(\bar\Om\setminus\Go)\times[0,\infty)$ such that, for each $\rho\in\Gm$,
$(\sigma,t)\mapsto(\ut,\vt)(\sigma,\rho,t)$ belongs to
$\mathcal{C}([0,\infty);L^2(0,T(\rho))^{n+m})$
and verifies \eqref{red:u}--\eqref{red:bcV} in the weak form with initial condition $\sigma \mapsto (\ud_o(\phi(\sigma;\rho)),\vd_o(\phi(\sigma;\rho)))$.
\end{definition}

\begin{remark}
By the converse in Theorem~\ref{thm:reduction}, any weak characteristic solution that
additionally belongs to
$[\mathcal{C}((\bar\Om\setminus\Go)\times[0,\infty))\cap
\mathcal{C}^1(\Om\times(0,\infty))]^{n+m}$ is a classical solution. Conversely, any classical solution is necessarily a weak characteristic solution, which justifies the proposed concept of solution.
\end{remark}

Finite-time stabilization of $\{(\ud,\vd)=0\}$ reduces to the finite-time stabilization of each one-dimensional system in the continuum representation. It is precisely here that Assumption~\ref{a:nt} is used: it provides a uniform bound on the lengths $T(\rho)$, hence on the settling times across the continuum.

\begin{figure*}
    \centering
    \includegraphics[width=\textwidth]{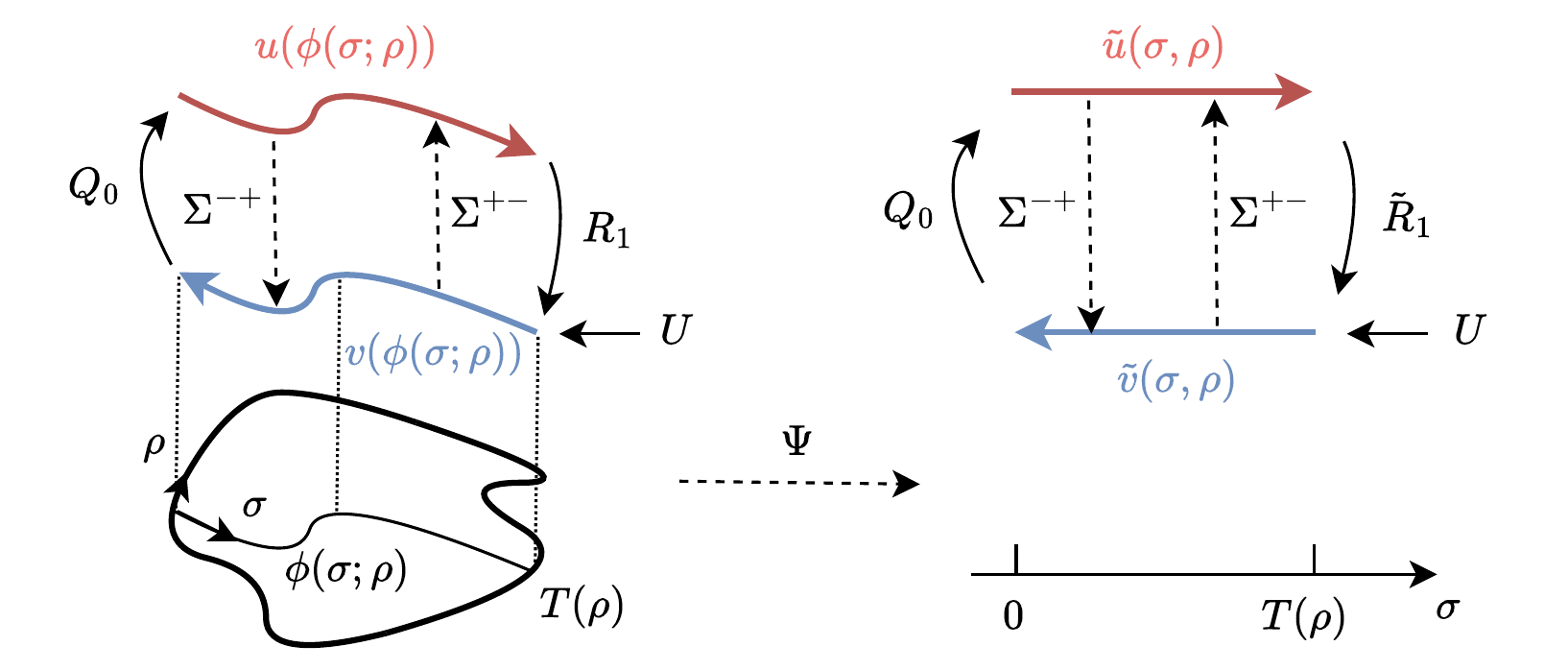}
    \caption{Geometric interpretation of the transformation $\Psi$ defined in \eqref{Psidef}: the characteristic curves are straightened in the transformed coordinates.}
    \label{fig3}
\end{figure*}


\section{Backstepping controller}\label{sec:control}

For each fixed $\rho\in\Gm$ we have a one-dimensional
$(n+m)$ hyperbolic system on $[0,T(\rho)]$ with constant coefficients,
i.e. precisely the class of systems treated in \cite{hu,auriol,coron_system}. We can therefore apply, for
each fixed $\rho$, any of the controllers in \cite{hu,auriol,coron_system}. Let us apply, e.g., the backstepping controller in \cite{auriol}. 

We first rescale $[0,T(\rho)]$ to $[0,1]$. Introduce
\begin{align*}
\bar\sigma=\frac{\sigma}{T(\rho)}\in[0,1], \qquad \bar t=\frac{t}{T(\rho)}\geq 0,
\end{align*}
and the \emph{normalized states}
\begin{align*}
&\ub(\bar\sigma,\rho,\bar t)=\ut(T(\rho)\bar\sigma,\rho,T(\rho)\bar t), \nonumber \\
&\text{and} \quad \vb(\bar\sigma,\rho,\bar t)=\vt(T(\rho)\bar\sigma,\rho,T(\rho)\bar t).
\end{align*}
Then, for each $\rho\in\Gm$, we have
\begin{align}
\frac{\partial\ub}{\partial\bar t}+\Lp\frac{\partial\ub}{\partial\bar\sigma}
 &= T(\rho)\big(\Spp\ub+\Spm\vb\big), \label{nU}\\
\frac{\partial\vb}{\partial\bar t}-\Lm\frac{\partial\vb}{\partial\bar\sigma}
 &= T(\rho)\big(\Smp\ub+\Smm\vb\big), \label{nV}\\
\ub(0,\rho,\bar t) &= Q_0(\rho)\,\vb(0,\rho,\bar t), \label{nbcU}\\
\vb(1,\rho,\bar t) &= \tilde{R}_1(\rho)\,\ub(1,\rho,\bar t)
   +\bar{\bm U}(\rho,\bar t), \label{nbcV}
\end{align}
where 
$$\bar{\bm U}(\rho,\bar t)=\bm U(\phi(T(\rho);\rho),T(\rho)\bar t).$$

According to \cite{hu}, the controller for \eqref{nU}--\eqref{nbcV} is given by 
\begin{align}
\bar{\bm U}(\rho,\bar t)
 =&~ -\tilde{R}_1(\rho)\,\ub(1,\rho,\bar t)
  +\int_0^1\!\big[K(1,\xi;\rho)\,\ub(\xi,\rho,\bar t)
    \nonumber \\
   &~+L(1,\xi;\rho)\,\vb(\xi,\rho,\bar t)\big]\,d\xi, \label{ctrl:norm}
\end{align}
where the backstepping kernels
\begin{align*}
K(\bar{\sigma},\xi;\rho)\in\mathbb{R}^{m\times n},\qquad
L(\bar{\sigma},\xi;\rho)\in\mathbb{R}^{m\times m},
\end{align*}
defined, for each $\rho\in \Gamma^-$, on the triangle 
$$\mathcal{T}=\{(\bar\sigma,\xi):0\le\xi\le\bar\sigma\le1\},$$
satisfy, in the weak sense,
\begin{align*}
\mu_i\frac{\partial K_{ij}}{\partial \bar{\sigma}}&-\lambda_j\frac{\partial K_{ij}}{\partial \xi}
 = T(\rho)\left(\sum_{k=1}^{n}\Spp_{kj}K_{ik}+\sum_{p=1}^{m}\Smp_{pj}L_{ip}\right) \nonumber \\
 &~\quad - \sum_{p=i}^{m}K_{pj}\omega_{ip}(\bar{\sigma};\rho) \quad 1\le i\le m,\ 1\le j\le n, \\
\mu_i\frac{\partial L_{ij}}{\partial \bar{\sigma}}&+\mu_j\frac{\partial L_{ij}}{\partial \xi}
 = T(\rho)\left(\sum_{p=1}^{m}\Smm_{pj}L_{ip}
    +\sum_{k=1}^{n}\Spm_{kj}K_{ik}\right) \nonumber \\
 &~\quad -\sum_{p=i}^m L_{pj}\omega_{ip}(\bar{\sigma};\rho) \quad 1\le i\le m,\ 1\le j\le m, 
\end{align*}
along with the boundary conditions
\begin{align*}
K_{ij}(\bar\sigma,\bar\sigma;\rho)
 &= -\frac{T(\rho)\,\Smp_{ij}}{\mu_i+\lambda_j} \qquad 1\le i\le m,\ 1\le j\le n, \\
L_{ij}(\bar\sigma,\bar\sigma;\rho)
 &= -\frac{T(\rho)\,\Smm_{ij}}{\mu_i-\mu_j} \qquad 1\le i, j\le m, \ j<i, \\
\mu_j\,L_{ij}(\bar\sigma,0;\rho)
 &= \sum_{k=1}^{n}\lambda_k\,[Q_0(\rho)]_{kj}\,K_{ik}(\bar\sigma,0;\rho) \ \ 1\le i, j\le m.
\end{align*}
For each $\rho\in \Gamma^-$, the functions $\omega_{ij}$ with $i\leq j$ are bounded and verify 
\begin{align*}
\omega_{ij}(\bar{\sigma};\rho) = (\mu_i-\mu_j)L_{ij}(\bar{\sigma},\bar{\sigma};\rho)+\Sigma_{ij}^{--} \quad \forall \bar{\sigma} \in [0,1]. 
\end{align*}

Invoking \cite[Theorem 2]{auriol}, we conclude that, for each $\rho\in \Gamma^-$,
\begin{align*}
(\bar{\ud}(\bar{\sigma},\rho,\bar{t}),\bar{\vd}(\bar{\sigma},\rho,\bar{t})) = 0 \quad \forall \bar{\sigma}\in [0,1], \ \forall \bar{t}\geq t_F,
\end{align*}
where 
\begin{align*}
t_F = \frac{1}{\lambda_1}+\frac{1}{\mu_1}. 
\end{align*}
Hence, for each $\rho \in \Gamma^-$, we have 
\begin{align*}
(\tilde{\ud}(\sigma,\rho,t),\tilde{\vd}(\sigma,\rho,t)) = 0 \quad \forall \sigma\in [0,T(\rho)], \ \forall t\geq T(\rho)t_F.
\end{align*}

We now express the control law \eqref{ctrl:norm} in the original variables. Setting $t=T(\rho)\bar t$
and $\sigma=T(\rho)\xi$, so that $d\xi=d\sigma/T(\rho)$ and
$\ub(\xi,\rho,\bar t)=\ut(\sigma,\rho,t)$, $\vb(\xi,\rho,\bar t)=\vt(\sigma,\rho,t)$,
we obtain
\begin{align*}
\bm U(&\phi(T(\rho);\rho),t)
 \nonumber \\
 =&~ -\tilde{R}_1(\rho)\,\ut(T(\rho),\rho,t) +\int_0^{T(\rho)}\!\big[\mathcal{K}^{u}(\sigma;\rho)\,\ut(\sigma,\rho,t)
    \nonumber \\
    &~+\mathcal{K}^{v}(\sigma;\rho)\,\vt(\sigma,\rho,t)\big]d\sigma,
\end{align*}
where  $\mathcal{K}^u(\cdot;\rho)\in\mathbb{R}^{m\times n}$ and
$\mathcal{K}^v(\cdot;\rho)\in\mathbb{R}^{m\times m}$ are given by
\begin{align*}
&\mathcal{K}^{u}(\sigma;\rho)=\frac{1}{T(\rho)}\,K\Big(1,\frac{\sigma}{T(\rho)};\rho\Big), \nonumber \\
&\text{and} \quad \mathcal{K}^{v}(\sigma;\rho)=\frac{1}{T(\rho)}\,L\Big(1,\frac{\sigma}{T(\rho)};\rho\Big).
\end{align*}
Now, for $z\in\Gp$, let the \emph{upstream arc}
\begin{align*}
C(z)=\{\phi(s;z):\tau^{-}(z)\le s\le0\}.
\end{align*}
As $\sigma$ runs over $[0,T(\rho(z))]$, the point $y=\phi(\sigma;\rho(z))$ traces $C(z)$, with
arc-length element $d\ell(y)=|a(y)|\,d\sigma$. Moreover
$\ut(\sigma,\rho(z),t)=\ud(y,t)$ and $\vt(\sigma,\rho(z),t)=\vd(y,t)$. Finally, since
$\phi(T(\rho(z));\rho(z))=z$, one has $\tilde{R}_1(\rho(z))=R_1(z)$. Hence,
\begin{align}
\bm U(z,t)=&~ -R_1(z)\,\ud(z,t)
  +\int_{C(z)}\!\big[\mathcal{G}^{u}(z,y)\,\ud(y,t)
   \nonumber \\
   &~+\mathcal{G}^{v}(z,y)\,\vd(y,t)\big]d\ell(y), \label{ctrl:x}
\end{align}
where
\begin{align*}
\mathcal{G}^{u}(z,y)=\frac{\mathcal{K}^{u}(\sigma(y);\rho(z))}{|a(y)|},
\quad
\mathcal{G}^{v}(z,y)=\frac{\mathcal{K}^{v}(\sigma(y);\rho(z))}{|a(y)|}.
\end{align*}

We have thus proved the following result. 

\begin{theorem}
Consider the system $\mathbb{S}$ and let Assumptions \ref{a:dom}--\ref{a:cofol} hold. Furthermore, let the control input $\bm U$ be given by \eqref{ctrl:x}. Then, for any initial condition $(\ud_o,\vd_o)$, where
$\sigma\mapsto (\ud_o(\phi(\sigma;\rho)),\vd_o(\phi(\sigma;\rho)))$ belongs to $L^2(0,T(\rho))^{n+m}$ for each
$\rho\in\Gm$, there exists a unique weak characteristic solution $(\ud,\vd)$ to $\mathbb{S}$ such that 
\begin{align}
(\ud(x,t),&\vd(x,t)) = 0 \nonumber \\
&~\ \forall x\in \bar{\Omega}\setminus \Gamma^0, \ \forall t\geq T_{max}\left(\frac{1}{\lambda_1}+\frac{1}{\mu_1}\right). 
\end{align}
\hfill $\square$
\end{theorem}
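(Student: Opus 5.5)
The strategy is to reduce everything to the continuum of one-dimensional systems provided by Theorem~\ref{thm:reduction} and Definition~\ref{def:char}, and then to work with one fixed $\rho\in\Gm$ at a time.

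\emph{Step 1: the feedback \eqref{ctrl:x} is the per-curve law.} For $z\in\Gp$, set $\rho=\rho(z)$. We use that $\phi(T(\rho);\rho)=z$, that $\tilde R_1(\rho)=R_1(z)$, and that $d\ell(y)=|a(y)|\,d\sigma$ along $C(z)$. Reading the computation preceding the statement backwards, \eqref{ctrl:x} becomes, in the normalized variables, exactly the law \eqref{ctrl:norm}. Two facts are needed here: $a$ does not vanish on $C(z)$, which holds since the characteristic is non-stationary by Assumption~\ref{a:nt}; and $\rho(\cdot)$ is a bijection from $\Gp$ onto the set of $\rho\in\Gm$ whose curves exit through $\Gp$. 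Consequently, under \eqref{ctrl:x}, the closed-loop system $\mathbb{S}$ is, in the sense of Definition~\ref{def:char}, the family indexed by $\rho\in\Gm$ of decoupled closed-loop systems \eqref{nU}--\eqref{nbcV}, \eqref{ctrl:norm}.

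\emph{Step 2: well-posedness and finite-time convergence for fixed $\rho$.} For each fixed $\rho$ the coefficients are constants, multiplied by $T(\rho)$, and by Assumption~\ref{a:cofol} the speeds satisfy \eqref{speeds}. I will invoke the backstepping results of \cite{hu,auriol} as follows:
\begin{itemize}
\item existence of bounded kernels $K,L,\omega$ on $\mathcal{T}$;
\item invertibility of the Volterra transformation;
\item well-posedness of the closed loop in $\mathcal{C}([0,\infty);L^2(0,1)^{n+m})$ for $L^2$ initial data, obtained from well-posedness of the target system, which is a cascade with a zero input at $\bar\sigma=1$;
\item convergence to zero for $\bar t\ge t_F=1/\lambda_1+1/\mu_1$, by \cite[Theorem 2]{auriol}.
\end{itemize}
Uniqueness in $\mathcal{C}([0,\infty);L^2)$ comes from the same references. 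Undoing the scaling $\bar t=t/T(\rho)$ gives $(\ut,\vt)(\cdot,\rho,t)=0$ on $[0,T(\rho)]$ for all $t\ge T(\rho)t_F$.

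\emph{Step 3: uniform settling time.} By Assumption~\ref{a:nt}, $T(\rho)\le T_{\max}$ for every $\rho$, so $T(\rho)t_F\le T_{\max}t_F$. Since $\Psi$ is a bijection from $\bar\Om\setminus\Go$ onto $\hat\Om$, every $x$ in $\bar\Om\setminus\Go$ equals $\phi(\sigma;\rho)$ for some $(\sigma,\rho)\in\hat\Om$. Hence $(\ud,\vd)(x,t)=(\ut,\vt)(\sigma,\rho,t)=0$ for $t\ge T_{\max}t_F$. Existence and uniqueness of the weak characteristic solution follow by assembling the per-$\rho$ solutions, since Definition~\ref{def:char} imposes no coupling across $\rho$. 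Strictly, the zero statement holds pointwise in $\sigma$ only almost everywhere in $L^2$; I will state it in that sense, or for continuous solutions.

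\emph{Main obstacle.} I expect the hardest step to be confirming that the cited one-dimensional results really apply uniformly for each $\rho$. The settling time $t_F$ must be independent of the coefficients: it is, since it depends only on the speeds $\lambda_1,\mu_1$. The kernels must exist for arbitrary $T(\rho)$-scaled coupling matrices and arbitrary $Q_0(\rho)$, which is the generality covered by \cite{hu,auriol}. Uniformity of kernel bounds in $\rho$ is not needed for the statement as posed. If a measurable dependence of the solution on $\rho$ were required, I would argue it from the continuous dependence of the kernels on the parameters $T(\rho)$ and $Q_0(\rho)$, but the definition does not demand it.
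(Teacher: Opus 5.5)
Your proposal is correct and follows essentially the same route as the paper: reduce $\mathbb{S}$ to the decoupled continuum via Theorem~\ref{thm:reduction}, rescale each characteristic to $[0,1]$, and apply the one-dimensional minimum-time backstepping result of \cite{auriol} for each fixed $\rho$ to get settling at $T(\rho)t_F$, then conclude with $T(\rho)\le T_{\max}$. The only cosmetic difference is that you read \eqref{ctrl:x} backwards into \eqref{ctrl:norm}, whereas the paper derives it forwards; the extra points you flag (nonvanishing of $a$ along $C(z)$, the almost-everywhere caveat for $L^2$ solutions, uniqueness via the one-dimensional references) are refinements the paper leaves implicit.
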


\section{Conclusion}
The collinearity assumption on the velocity fields is admittedly restrictive. In one space dimension, transport necessarily occurs along a single spatial axis, so that all velocity fields are automatically collinear: they can differ only in magnitude and direction. In higher dimensions, however, this is no longer the case: the velocity fields may not be collinear, and the notion of hyperbolicity accommodates far more general systems than those considered here. As such, the present work is to be understood only as a first step toward the backstepping control of general coupled first-order hyperbolic systems in multiple space dimensions. A distinctive feature of our approach, both in the scalar case treated in \cite{camil_1} and in the coupled setting of the present work, is that, once the multidimensional system has been transformed into a continuum of one-dimensional systems via our change of variables, the elements of the continuum are completely decoupled from one another. Consequently, we do not need to exploit the continuum structure any further: each one-dimensional system is stabilized independently, and no interaction between distinct elements of the continuum needs to be accounted for. In particular, we do not need to resort to the techniques developed for the control of ensembles of hyperbolic equations, where coupling occurs across all elements of the ensemble \cite{ensemble0,ensemble1,ensemble3,ensemble2}. However, these works are inspirational, as they show that one can work with ensembles of hyperbolic equations that are coupled, which provides a possible reference point when attempting to extend our framework beyond the collinear case. Indeed, through our change of variables, one can see that the general case with non-collinear velocities reduces to handling certain couplings in the ensemble representation. Whether and how existing ensemble control techniques can be extended to this setting remains an open question.

\end{document}